\newtheorem{theorem}{Theorem}
\newtheorem{definition}{Definition}
\newtheorem{lemma}{Lemma}
\newtheorem{proposition}{Proposition}
\newcommand{\Fq}{\mathbb F_{q}}
\newcommand{\Ftwo}{\mathbb F_{2}}
\newcommand{\Ftwom}{\mathbb F_{2^m}}
\newcommand{\F}{\mathbb F}
\newcommand{\RR}{\mathbb R}
\renewcommand{\hat}{\widehat}
\newcommand{\argmax}{\operatornamewithlimits{\arg\max}}
\algrenewcommand\algorithmicrequire{\textbf{Input:}}
\algrenewcommand\algorithmicensure{\textbf{Output:}}
\newcommand{\Fail}{\ensuremath{\mathsf{Fail}}\xspace}
\newcommand\T{^\star}
\renewcommand{\vec}[1]{\bm{#1}}
\newcommand{\errs}{\epsilon}
\newcommand{\errsL}{\errs_L}
\newcommand{\pr}{\,\mathalpha{:}\,} % colon in projective points with better spacing
\begin{document}
\onecolumn %To make one column
\sloppy

%% Paper Title
%% You can use linebreaks \\ within to get better formatting as
%% desired. 
\title{\LARGE \bf Reduced List-Decoding of Reed--Solomon Codes Using Reliability Information}

%% Author names and affiliations:
%%
%% Avoiding spaces at the end of the author lines is not a problem with
%% conference papers because we don't use \thanks or \IEEEmembership.
%%
%% For several authors with only one affiliation:
%%
% \author{
%   \IEEEauthorblockN{}
%   \IEEEauthorblockA{Institute of Communications Engineering\\
%      Ulm University, Ulm, Germany\\
%     Email:\texttt{ \{mostafa.h.mohamed, martin.bossert\}@uni-ulm.de}} 
% }
%\author{
%  \IEEEauthorblockN{Mostafa H. Mohamed, Johan S. R. Nielsen and Martin Bossert}
%  \IEEEauthorblockA{Institute of Communications Engineering\\
%        Ulm University, Ulm, Germany\\
%    Email: \{mostafa.h.mohamed, johan.nielsen, martin.bossert\}@uni-ulm.de}, 
%}

\author{Mostafa H. Mohamed$^{*}$, Johan S. R. Nielsen$^{*}$ and Martin Bossert$^{*}$% <-this % stops a space
\thanks{This work has been supported by DFG, Germany under grant Bo 867/28.}% <-this % stops a space
\thanks{* The authors are with Institute of Communications Engineering, Ulm University, Ulm 89081, Germany
        {\tt\small \{mostafa.h.mohamed, johan.nielsen, martin.bossert\}@uni-ulm.de}}%
}

%% Use for special paper notices
%\IEEEspecialpapernotice{(Invited Paper)}

%% To balance the two columns, you should reduce the text-height of
%% the last page using the following command:
%%%%%%%%%%%%%%%%%%%%%%%%%%%%%%%%%%%%%%%%%%%%%%%%%%%%%%%%%%%%%%%%%%%%%
%\addtolength{\textheight}{-9.35cm}
%%%%%%%%%%%%%%%%%%%%%%%%%%%%%%%%%%%%%%%%%%%%%%%%%%%%%%%%%%%%%%%%%%%%%
%% with an appropriate value. This command must be place on the second
%% last page, i.e., for a one-page abstract here, for a two-page
%% abstract right after the \maketitle command.

%% Create the title:
\maketitle

%% Abstract: 
%% For the final version of the accepted paper, please make sure you
%% remove the comment "THIS PAPER IS ELIGIBLE FOR THE STUDENT PAPER
%% AWARD."
%%
\begin{abstract}
We decode Reed-Solomon codes using soft information provided at the receiver.
The Extended Euclidean Algorithm (EEA) is considered as an initial step to obtain an intermediate result.
The final decoding result is obtained by interpolating the output of the EEA at the least reliable positions of the received word.
We refer to this decoding method as reduced list-decoding, since not all received positions are used in the interpolation as in other list-decoding methods, such as the Guruswami-Sudan and Wu algorithms.
Consequently the complexity of the interpolation step is reduced considerably.
The probability of failure can be minimised by adjusting certain parameters, making it comparable with the K\"otter-Vardy algorithm but having a much lower complexity.
\end{abstract}

\section{Introduction}

Polynomial-time list-decoding of algebraic codes has been a lively research field for the last 15 years, since the Sudan algorithm \cite{Sudan}, and the generalisation of the Guruswami--Sudan algorithm (GSA) \cite{GS} demonstrated that such algorithms exist for Reed--Solomon codes.
The GSA corrects up to the Johnson bound $n - \sqrt{n(n-d)}$, which is a great improvement for low-rate codes; unfortunately a quite modest one for high-rate codes.

Also for Reed--Solomon codes, Wu \cite{Wu2008} presented an analytic extension of the classical Berlekamp--Massey algorithm (BMA) where decoding is done by ``rational interpolation'' using information from the output of the BMA, in case half the minimum distance-decoding fails.
The decoding radius turns out to be exactly the Johnson bound again.
The Wu algorithm first finds the error positions and can then by different means obtain the codeword or information, while the GSA finds the information word directly.
The Wu algorithm has worse complexity than the GSA for low-rate codes but better for medium to high-rate, though the asymptotic expressions are roughly the same \cite{NielsenPhD}.
It has since been shown how the initial step of the Wu decoder can be put into a more algebraic framework and e.g.~be carried out by the Extended Euclidean Algorithm (EEA), see \cite{Trif2010,WuNiel2013}.

The above algorithms are both hard-decision decoders.
The K\"otter--Vardy (KV) algorithm \cite{KV2003} is a generalisation of the GSA which incorporates soft information.
This celebrated algorithm has good performance but at the cost of relatively high complexity.

We present here a novel soft-decision list-decoder which generalises the Wu list-decoder (albeit in a different way than the KV from the GSA).
Our algorithm achieves a decoding performance close to that of the KV but retains a much lower complexity.
The core idea is to use the soft information to restrict the positions on which to perform the expensive rational interpolation.
We don't need to ``catch'' all errors in this restricted position set, but only some fraction of them, rendering the method fairly resilient to the quality of the reliability information.

Though our method returns a list of possible codewords, it is not similar to hard-decision list-decoding where the list contains all codewords in a certain radius.
Similar to the KV, the codewords in our list depend on the reliability information extracted from the received word.

We start with a few definitions and notation in Section \ref{S:Defs}.
After that, we revisit the hard-decision Wu decoder in Section \ref{S:Wu}, since it helps describe and understand our method.
In Section \ref{S:Rel}, we study the reliability information used for decoding.
For our simulations we have used a specific channel model and a particular method for calculating the reliability which is explained in detail.
We then give the novel reduced list-decoder in Section \ref{S:Red}.
Finally, in Section \ref{S:END}, we discuss simulation results and conclude.

\section{Definitions and Notations} \label{S:Defs}

Let $q$ be a power of a prime, $\Fq$ the finite field of order $q$. For $a, b \in \mathbb R$, we use the notation $[a;b] = \{ x \in \mathbb R \mid a \leq x \leq b \}$.

\begin{definition}[Generalised Reed--Solomon (GRS) code]\label{def:rs}
A $\mathcal{GRS}(n,k,d)$ code over a finite field $\Fq$ with $d=n-k+1$ is the set
{\small
\[ 
    \big\{ \big(  v_0C(\alpha_0),\ldots, v_{n-1}C(\alpha_{n-1}) \big)
        \mid C(x) \in \Fq[x] \land \deg C(x) < k \big\}
\]
}
for some $n$ distinct $\alpha_0, \ldots, \alpha_{n-1} \in \Fq$ as well as $n$ non-zero $v_0, \ldots, v_{n-1} \in \Fq$.
\end{definition}
The $\alpha_i$ are called \emph{evaluation points} and the $v_i$ \emph{column multipliers}. 

Let $\vec c = (c_0,c_1,\dots,c_{n-1}) \in \mathcal{GRS}$ be a codeword sent by the transmitter.
The receiver obtains $\vec r = \vec c + \vec e \in \Fq^n$ as a hard-decision received word, where $\vec e$ is an error.
The transmission model is explained in detail in Section \ref{S:Channel}, where we describe the nature of the reliability information accompanying $\vec r$.
The error positions, i.e.~the indexes where $\vec e$ is non-zero, are denoted by $E=\{i \mid c_i\ne r_i\}$ where $r_i$ are the elements of $\vec r$.
The number of errors in $\vec r$ are denoted $\errs = |E|$.
Our decoder is based on the Key Equation, where we, instead of seeking $\vec e$ or $\vec c$ directly, find two polynomials which encode $\vec e$:
\vspace{1em}
\begin{definition}
  The error locator $\Lambda(x)$ and the error evaluator $\Omega(x)$ are
  \begin{IEEEeqnarray*}{rCl+rCl}
    \Lambda(x) &=& \prod_{i \in E} (x-\alpha_i) ,\\
    \Omega(x) &=& \sum_{i \in E} e_i \hat v_i \prod_{j \in E \setminus \{i\}} (x-\alpha_j).
  \end{IEEEeqnarray*}
where $\hat v_i = (v_i\prod_{h \neq i}(\alpha_i-\alpha_h))^{-1}$.
\end{definition}
Note that $\Omega(x)$ and $\Lambda(x)$ are coprime, and that $\errs = \deg \Lambda(x) > \deg \Omega(x)$.

At the receiver, the syndrome polynomial $S(x)$ is calculated based on $\vec r$:
\begin{equation}\label{eqn:rsSyn}
  S(x) = \sum_{i=0}^{n-k-1} x^i\sum_{j=0}^{n-1} r_j\hat v_j\alpha_j^{d-2-i}.
\end{equation}

One can show that these polynomials are related by the Key Equation (see e.g. \cite[p. 362]{McWSl}):
\begin{equation}\label{eqn:RSkey}
    \Lambda(x)S(x) \equiv \Omega(x) \mod x^{d-1}.
\end{equation}
When $\errs < d/2$, then the above equation can be used to determine $\Lambda(x)$ and $\Omega(x)$ directly, e.g.~by using the BMA \cite[Ch. 7]{BerlekampBook} or the EEA \cite[p. 362]{McWSl}, \cite{BossBezz2013}.
Here we focus on the latter.
At the core of Wu's list-decoder is the observation that even when $\errs \geq d/2$, the result of running either the BMA or EEA still reveals crucial information about $\Lambda(x)$.
Wu proved these properties for the BMA \cite{Wu2008}; a similar approach can be used for the EEA, or one can use the theory of Gr\"o{}bner bases of $\Fq[x]$-modules.
For our purposes, the properties can be summed up in the following result, paraphrased from \cite[Proposition 5 and Section IV.B]{WuNiel2013}:

\begin{proposition}
  \label{prop:Lambda_partial}
  If the EEA is run on $x^{d-1}$ and $S(x)$ and halted at a certain iteration, we can extract from the intermediate polynomials at this point coprime polynomials $H_1(x)$ and $H_2(x)$ which satisfy\footnote{%
    More explicitly, recall that the Euclidean algorithm calculates quotients $Q_i(x)$ and remainders $R_i(x)$ ($i$ being the iteration index) \cite{BossBezz2013}.
    The EEA represents any remainder in the form of $R_i(x)=U_i(x)S(x) \mod x^{d-1}$, where the $U_i(x)$ can be calculated recursively using the quotients $Q_i(x)$.
    Then $H_1(x) = U_i(x)$ and $H_2(x) = U_{i-1}(x)$ for a particularly chosen iteration $i$.
  }:
  \begin{equation}
    \label{MainEqn}
    \Lambda(x) = A(x)H_1(x) + B(x)H_2(x),
  \end{equation}
  where $A(x)$ and $B(x)$ are unknown polynomials satisfying
  \begin{IEEEeqnarray*}{rCl}
    \deg A(x) &=& \deg \Lambda(x) - \deg H_1(x) ,\\
    \deg B(x) &\leq& \deg \Lambda(x) - d + \deg H_1(x).
  \end{IEEEeqnarray*}
\end{proposition}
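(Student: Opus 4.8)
The plan is to work with the $\Fq[x]$-module of solutions to the Key Equation and exploit the structure the Euclidean algorithm imposes on it. Concretely, consider the module
\[
  \mathcal{M} = \big\{ (a(x), b(x)) \in \Fq[x]^2 \mid a(x) S(x) \equiv b(x) \mod x^{d-1} \big\}.
\]
By the Key Equation \eqref{eqn:RSkey}, the pair $(\Lambda(x), \Omega(x))$ lies in $\mathcal{M}$. The module $\mathcal{M}$ is free of rank $2$, and one can show that the consecutive remainders of the EEA run on $x^{d-1}$ and $S(x)$ yield a sequence of elements of $\mathcal{M}$ that form, at each stage, a basis of $\mathcal{M}$ with tightly controlled degrees — this is exactly the content cited from \cite{WuNiel2013}. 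So the first step is to fix the EEA iteration $i$ appropriately (choosing it so that the degree of the current remainder $R_i(x)$ drops below the relevant threshold relative to $\errs$) and set $H_1(x) = U_i(x)$, $H_2(x) = U_{i-1}(x)$, noting these are coprime because consecutive Bézout multipliers in the EEA always are.

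Next I would argue that the pair $(H_1, R_i)$ and $(H_2, R_{i-1})$ together form a basis of $\mathcal{M}$: since $\Lambda(x) S(x) \equiv \Omega(x)$, the element $(\Lambda, \Omega)$ can be written uniquely as
\[
  (\Lambda(x), \Omega(x)) = A(x)\,(H_1(x), R_i(x)) + B(x)\,(H_2(x), R_{i-1}(x))
\]
for polynomials $A(x), B(x) \in \Fq[x]$. Reading off the first coordinate gives exactly \eqref{MainEqn}. It remains to pin down the degree constraints on $A$ and $B$. For $\deg A$: from the EEA degree profile we know $\deg H_1 < \deg H_2$ and $\deg R_{i-1} > \deg R_i$, and in fact $\deg H_1 + \deg R_{i-1} = d-1$ while $\deg H_2 + \deg R_i = d-1$ (the standard EEA invariant that the degrees of a Bézout multiplier and the "other" remainder are complementary with respect to $d-1$). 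Looking at the first coordinate, the term $A(x)H_1(x)$ has degree $\deg A + \deg H_1$ and, because $\deg H_2 > \deg H_1$, the term $B(x)H_2(x)$ cannot cancel the top coefficient unless $\deg B + \deg H_2 = \deg A + \deg H_1$; a degree count together with $\deg \Lambda = \errs$ forces $\deg A = \deg\Lambda - \deg H_1$. For the bound on $\deg B$: apply the same reasoning to the \emph{second} coordinate, where $\deg \Omega < \errs$ and the analogous complementary-degree identities give $\deg(B R_{i-1}) \le \deg \Omega < \errs$ unless there is cancellation with $A R_i$; combining $\deg B + \deg R_{i-1} \le \max(\errs - 1, \deg A + \deg R_i)$ with the identities $\deg R_{i-1} = d - 1 - \deg H_1$ and $\deg A = \deg \Lambda - \deg H_1$ yields $\deg B \le \deg\Lambda - d + \deg H_1$ after rearranging.

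The main obstacle I anticipate is getting the degree bookkeeping exactly right — in particular, verifying that the "generic" non-cancellation assumption in the degree counts above is actually forced, rather than merely typical. This requires using the precise EEA invariants (the complementary degree relations between $U_j$ and $R_{j-1}$, and the strict monotonicity of the degree sequences) and the specific choice of halting iteration $i$, which must be selected exactly so that $\deg R_i$ sits on the correct side of the threshold determined by $\errs$ and $d$. An alternative to the ad hoc degree chase is the Gröbner-basis viewpoint: equip $\Fq[x]^2$ with a suitable weighted term order, observe that $(H_1, R_i)$ and $(H_2, R_{i-1})$ are a Gröbner basis of $\mathcal{M}$ (this is essentially what the EEA computes), and then the degrees of $A$ and $B$ in the unique reduction of $(\Lambda, \Omega)$ against this basis are read off directly from the leading-term orders — this packages the cancellation argument cleanly. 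Either route reduces the proof to invoking \cite[Proposition 5, Section IV.B]{WuNiel2013} and a short verification; I would present the module/Gröbner version for rigor and relegate the explicit $U_i$ recursion to the footnote already in the statement.
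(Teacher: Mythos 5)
The paper does not actually prove this proposition: it is imported verbatim from \cite{WuNiel2013} (``paraphrased from [Proposition 5 and Section IV.B]''), with only the remark that one can either adapt Wu's BMA argument or use Gr\"obner bases of $\Fq[x]$-modules. Your proposal is precisely the second of these routes --- the module $\mathcal M$ of Key Equation solutions, the fact that consecutive EEA pairs $(U_j,R_j)$ form a basis of $\mathcal M$, and the degree read-off --- so in spirit you are supplying the proof the paper delegates to its reference, and the architecture is the right one.

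Two concrete points in your degree bookkeeping are wrong as written, though neither is fatal. First, since the $\deg U_j$ are strictly increasing in $j$, you have $\deg H_1=\deg U_i>\deg U_{i-1}=\deg H_2$, the opposite of your claim $\deg H_1<\deg H_2$; consequently your stated reason that $B(x)H_2(x)$ cannot cancel the leading term of $A(x)H_1(x)$ does not hold. Second, the EEA invariant pairs $U_j$ with $R_{j-1}$ (namely $\deg U_j+\deg R_{j-1}=d-1$, which follows from $U_jR_{j-1}-U_{j-1}R_j=\pm x^{d-1}$), so $\deg H_2+\deg R_{i-2}=d-1$, not $\deg H_2+\deg R_i=d-1$; only the relation $\deg H_1+\deg R_{i-1}=d-1$, which you do use in the final rearrangement, is correct. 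The non-cancellation you need is genuinely \emph{not} forced by comparing $\deg H_1$ with $\deg H_2$; it is forced by the minimality (Gr\"obner) property of the basis $\{(U_i,R_i),(U_{i-1},R_{i-1})\}$ under the weighted order $\deg_w(a,b)=\max(\deg a,\deg b+1)$, together with the halting criterion that makes the two basis elements have \emph{different} leading positions (equivalently $\deg R_i<\deg U_i$ and $\deg R_{i-1}\geq\deg U_{i-1}$). With that, $\deg_w$ of $A\cdot(U_i,R_i)+B\cdot(U_{i-1},R_{i-1})$ equals $\max(\deg A+\deg U_i,\;\deg B+\deg R_{i-1}+1)$ with no cancellation, and both stated bounds ($\deg A=\deg\Lambda-\deg H_1$ from the first slot, $\deg B\leq\deg\Lambda-d+\deg H_1$ from $\deg R_{i-1}=d-1-\deg H_1$) drop out. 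So the Gr\"obner-basis packaging you relegate to an ``alternative'' at the end should in fact be the main argument; the ad hoc degree chase as you wrote it has the inequalities backwards and does not close.
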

In case of $\errs < d/2$, we see $\deg A(x) + \deg B(x) < 0$ so at least one of them is zero.
By some further properties of $H_1(x)$ and $H_2(x)$ one can show that it must be $B(x) = 0$, and that $\deg A(x) = 0$, which means $\Lambda(x)$ equals $H_1(x)$ up to a constant factor.
This is exactly classical hard-decision decoding up to half the minimum distance.

Wu generalised the noisy polynomial interpolation method by Guruswami and Sudan \cite{GS} into one for rational expressions.
We describe in the next section how this applies to finding $A(x)$ and $B(x)$ when $\errs \geq d/2$, but first we describe the general interpolation problem and solution.
Loosely, we can phrase the goal as follows: given points $(x_i, \beta_i)$ for $i = 1, \ldots, N$, one seeks two polynomials $f_1(x)$, $f_2(x) \in \Fq(x)$ and $\frac {f_1(x_i)}{f_2(x_i)} = \beta_i$ holds for many, but not necessarily all, values of $i$, and such that $\deg f_1(x)$ and $\deg f_2(x)$ are both small.
To properly handle that $f_2(x)$ might have roots among the $x_i$, Trifonov \cite{Trif2010} suggested to consider the $\beta_i$ as partially projective points $(y_i \pr z_i) \in \mathbb P^1_{\Fq}$, i.e.~that $(y_i \pr z_i) = (\lambda y_i \pr \lambda z_i)$ for all $\lambda \in \Fq^\star$ and where $(0 \pr 0)$ is disallowed.

The ingenious way to solve the problem is to construct a polynomial $Q \in \Fq[x][y, z]$, homogeneous in $y$ and $z$, in such a way that it is guaranteed that $Q(f_1, f_2)=0$.
The $f_1(x), f_2(x)$ can then be extracted from $Q$ as roots, which is possible since $y$ and $z$ are homogeneous in $Q$.
More precisely, the following theorem is a paraphrasing of \cite[Lemma 3]{Trif2010}:
\vspace{-1em}
\begin{theorem}[Rational Interpolation] \label{RatIntTheorem} \hspace{8em}
  Let $\ell$, $s$ and $T$ be positive integers, and let $\{(x_0,y_0 \pr z_0),(x_1,y_1 \pr z_1),\dots,(x_{N-1},y_{N-1} \pr z_{N-1})\}$ be $N \geq T$ points in $\Fq \times \mathbb P^1_{\Fq}$.
  Assume $Q(x,y,z)= \sum_{i=0}^{\ell} Q(x) y^i z^{\ell-i}$ is non-zero and such that $(x_i, y_i  \pr z_i)$ are zeroes of multiplicity $s$ for all $i=0,\dots,N-1$, and $\deg_{(1,w_1,w_2)}Q(x,y,z) < sT$, for two $w_1,w_2 \in \RR_+ \cup \{0\}$.
  Any two coprime polynomials $f_1(x),f_2(x)$ satisfying $\deg f_1(x) \le w_1, \deg f_2(x) \le w_2$, as well as, $z_i f_1(x_i)+y_i f_2(x_i)=0$ for at least $T$ values of $i$, will satisfy $Q(x,f_1(x),f_2(x)) = 0$.
\end{theorem}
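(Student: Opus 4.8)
The plan is to show that the polynomial $g(x) := Q(x, f_1(x), f_2(x)) \in \Fq[x]$ vanishes identically, by a degree argument: if I can prove $g$ has more roots (with multiplicity) than its degree bound allows, then $g = 0$. First I would bound $\deg g$. Since $Q(x,y,z) = \sum_{i=0}^\ell Q_i(x) y^i z^{\ell-i}$ and $\deg_{(1,w_1,w_2)} Q < sT$, each monomial $Q_i(x) y^i z^{\ell-i}$ has $\deg Q_i(x) + i w_1 + (\ell - i) w_2 < sT$. Substituting $y = f_1(x)$, $z = f_2(x)$ with $\deg f_1 \le w_1$ and $\deg f_2 \le w_2$ gives $\deg\big(Q_i(x) f_1(x)^i f_2(x)^{\ell-i}\big) \le \deg Q_i(x) + i w_1 + (\ell-i) w_2 < sT$, so $\deg g < sT$.

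Next I would show that each of the (at least) $T$ indices $i$ with $z_i f_1(x_i) + y_i f_2(x_i) = 0$ contributes a root of $g$ at $x = x_i$ of multiplicity at least $s$. Fix such an index $i$ and write $x_i$ for the evaluation point. Because $(x_i, y_i \pr z_i)$ is a zero of $Q$ of multiplicity $s$ in the appropriate (homogeneous) sense, a shift of coordinates puts $Q$ in a form where, after dehomogenising in the affine chart around $(y_i \pr z_i)$, every term of the Taylor expansion of $Q$ about $(x_i, y_i \pr z_i)$ has total degree at least $s$. The key observation is that the curve $x \mapsto (x, f_1(x), f_2(x))$ passes through $(x_i, y_i \pr z_i)$: the condition $z_i f_1(x_i) + y_i f_2(x_i) = 0$ says exactly that $(f_1(x_i) \pr f_2(x_i))$ and $(y_i \pr z_i)$ are the same projective point (this is where the partially-projective setup of Trifonov is essential — it also handles the case $f_2(x_i) = 0$ cleanly). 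Writing $x = x_i + (x - x_i)$ and substituting the local expansions of $f_1$, $f_2$ around $x_i$ into the Taylor expansion of $Q$, every surviving term acquires a factor $(x - x_i)^s$, because each of the $s$-or-more homogeneous factors vanishes to order at least $1$ at $x = x_i$. Hence $(x - x_i)^s \mid g(x)$.

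Combining the two steps: $g(x)$ is divisible by $\prod (x - x_i)^s$ taken over the (at least $T$) good indices, which have distinct $x$-coordinates since the $x_i$ are assumed distinct — so $g$ has at least $sT$ roots counted with multiplicity. But $\deg g < sT$, forcing $g(x) = 0$, i.e. $Q(x, f_1(x), f_2(x)) = 0$, which is the claim.

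The main obstacle I anticipate is making the "multiplicity $s$ zero of $Q$" argument fully rigorous in the homogeneous/partially-projective setting — one has to pin down precisely what "zero of multiplicity $s$" means for $Q \in \Fq[x][y,z]$ homogeneous in $y, z$ at a point $(x_i, y_i \pr z_i) \in \Fq \times \mathbb P^1_{\Fq}$, choose a suitable affine chart (say normalising so that one of $y_i, z_i$ equals $1$, or more symmetrically using a Hasse-derivative / shifted-coordinate definition), and verify the substitution argument survives the cases where $y_i = 0$ or $z_i = 0$ and where $f_2$ (or $f_1$) has a root at $x_i$. Once the bookkeeping for multiplicities is set up correctly, the rest is the standard "too many roots" argument and is essentially mechanical. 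Since this is precisely \cite[Lemma 3]{Trif2010}, I would lean on that treatment for the delicate multiplicity definitions and present the degree-counting skeleton above as the proof.
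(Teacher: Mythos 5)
The paper offers no proof of this theorem at all --- it is explicitly presented as a paraphrase of \cite[Lemma 3]{Trif2010} --- so there is nothing in-paper to compare against; your degree-counting skeleton (bound $\deg Q(x,f_1(x),f_2(x)) < sT$ via the weighted-degree constraint, extract a factor $(x-x_i)^s$ at each of the at least $T$ agreeing points, and conclude the univariate polynomial must vanish identically) is exactly the standard Guruswami--Sudan-style argument that Trifonov's proof uses, and it is correct. One small slip worth flagging: the hypothesis $z_i f_1(x_i) + y_i f_2(x_i) = 0$ (note the plus sign) places the curve through the point $(x_i, -y_i \pr z_i)$ rather than literally $(x_i, y_i \pr z_i)$ unless the characteristic is $2$; this is harmless because the notion of ``zero of multiplicity $s$ at $(x_i, y_i \pr z_i)$'' in Trifonov's setup is defined precisely via divisibility by powers of the linear form $z_i y + y_i z$ (and of $x - x_i$), so the sign is absorbed into the multiplicity definition you already defer to --- but your sentence claiming the two projective points are ``the same'' is not literally true as written and should be rephrased in terms of that definition.
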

In the above, $\deg_{(w_x,w_y,w_z)}$ is the $(w_x,w_y,w_z)$-weighted degree, i.e.~$\deg_{(w_x,w_y,w_z)} x^iy^jz^h = w_x i + w_y j + w_z h$ and for polynomials, it is the maximal weighted degree of its monomials.

The two integers $\ell$ and $s$, often referred to as the \emph{list size} and \emph{multiplicity} respectively, are not part of the rational interpolation problem one wishes to solve, but should simply be chosen in such a way to make it possible to construct the $Q$-polynomial.
One can regard the root-requirements to $Q$ as a linear system of equations in the monomials of $Q$, and the weighted degree constraints as a bound on the number of monomials available.
This gives a bound on the parameters on when it is guaranteed that a satisfactory $Q$ exists.
By analysis one can then conclude that this is the case whenever
\begin{equation} \label{Bound}
  T^2 > N(w_1 + w_2).
\end{equation}
Such an analysis along with precise choices of $s$ and $\ell$ can be found in \cite{Trif2010} or in more detail in \cite[Proposition 5.7]{NielsenPhD}.
In the latter, it is also proved that $s$ and $\ell$ can be chosen such that a satisfactory $Q$ exist under the additional requirement $\frac \ell s \geq \frac T w$; a subtle fact which we need.

\section{Review of Wu list-decoding} \label{S:Wu}

The hard-decision Wu list-decoder is now simply combining Proposition \ref{prop:Lambda_partial} with the tool of rational interpolation, Theorem \ref{RatIntTheorem}.

For this, the basic observation is that $\Lambda(x)$ evaluates to zero at all the error locations.
Thus, by Proposition \ref{prop:Lambda_partial}, we get for $i \in E$ that
\[
  A(\alpha_i)H_1(\alpha_i) + B(\alpha_i)H_2(\alpha_i) = 0.
\]
In other words, if we define $(x_i, y_i \pr z_i) = \big(\alpha_i, H_1(\alpha_i) \pr H_2(\alpha_i) \big)$ for $i = 1,\ldots, n$, then finding $A(x)$ and $B(x)$ is exactly a rational interpolation problem with $N = n$ and $T = \errs$.
Assuming that we knew the number of errors $\errs$, (\ref{Bound}) then tells us that we can find a $Q$ such that $Q(A(x), B(x)) = 0$  -- and therefore solve this problem -- as long as:
\begin{IEEEeqnarray*}{rCl+c+rCl}
  \errs^2 &>& n(2\errs - d)   & \iff &
  \errs < n - \sqrt{n(n-d)}.
\end{IEEEeqnarray*}
This is the Johnson bound, which is also the decoding radius of the GSA.

A caveat is of course that we do not know the number of errors.
The solution turns out to be surprisingly simple: we choose some decoding radius $\tau < n - \sqrt{n(n-d)}$, and we then construct a $Q$-polynomial for the worst possible case, i.e.~we set $T = \tau$.
It turns out that it can be shown -- see e.g.~\cite[Lemma 7]{WuNiel2013} -- that this $Q$-polynomial is also a valid interpolation polynomial for the rational interpolation problem where $T = \errs$ for any $\errs \leq \tau$, provided that $\ell/s \geq 1$.
Note that we are guaranteed to be able to find such $\ell$ and $s$ by our earlier remark since $T/w = \tau/(2\tau - d) > 1$ and $\tau < d$.

In Section \ref{S:Red}, we see that in the reduced list-decoding setting, the above concerns become more involved and lead to surprising behaviour.

The Wu list-decoder therefore consists of several non-trivial computational steps: the syndrome computation, the EEA, construction of $Q$, and root-finding in this $Q$.
In \cite{WuNiel2013}, it is shown how the complexity of all the steps can be completed in time $\mathcal{O}(\ell^M s n \log^{\mathcal{O}(1)}(\ell n))$, where $M \leq 3$ is the exponent for matrix multiplication.
This is the same as the fastest realisations of the GSA, see e.g.~\cite{cohn10} or \cite[Section 3.2]{NielsenPhD}.
One should be aware that the value of the parameter $s$ differs in the two methods (but the $\ell$ does not), see \cite[Section 5.2.2]{NielsenPhD}.

The aim of this paper is to let the Wu list-decoder take advantage of certain reliability information at the receiver, with the ultimate goal of reducing the complexity of the decoding while keeping decoding performance high.
We now introduce the form of this reliability information, as well as the channel model considered in the simulations.

\section{Reliability Information and Channel Model} \label{S:Rel}

Our decoding algorithm works as long as the following type of received information can be obtained: a hard-decision vector $\vec r \in \Fq^n$ as well as a reliability vector $\vec \eta = (\eta_0, \ldots, \eta_{n-1}) \in [0; 1]^n$, where $\eta_i$ is a measure of the probability that $r_i$ is the sent codeword symbol $c_i$.

How to obtain these two quantities, and how well the decoder will then finally perform, depends on the exact channel model.
For this paper, we use binary modulation over an Additive White Gaussian Noise (AWGN) channel for simulations. In the following, we describe the relevant details for this model.

\subsection{Channel Model} \label{S:Channel}

We now constrain ourselves to binary extension fields, i.e.~$q = 2^m$.
The sender wishes to transmit the codeword $\vec c \in \Ftwom^n$.
Using some given basis, he represents each $\Ftwom$ symbol as a vector in $\Ftwo^m$, and in turn uses BPSK modulation with $0$ mapped to $+1$ and $1$ to $-1$.
The resulting $nm$ symbols over $\{-1, +1\}$ which make up $\vec c$ are then individually transmitted over an AWGN channel.

The received signal has a Signal-to-Noise SNR $= E_s/(2\cdot N_0\cdot R)$, where $E_s$ is the energy of a single symbol, $N_0$ is the single-sided noise energy, and $R=\nicefrac{k}{n}$ is the code rate.
The raw output from the channel to the receiver is then a matrix $\vec y = [ y_{i,j} ] \in \RR^{n \times m}$, resulting from the sent $\pm 1$-symbols being perturbed by the noise.

\subsection{Reliability Calculation}
From $\vec y$, the receiver can calculate a matrix $\vec \rho = (\rho_{i, \beta}) \in [0; 1]^{n \times 2^m}$, where the columns of $\vec \rho$ are indexed by the elements of $\Ftwom$, and where $\rho_{i, \beta}$ is a measure of the probability that $c_i = \beta$, given the received matrix  $\vec y$.
For BPSK the relation between $\vec \rho$ and $\vec y$ is given by (see \cite[Section 7.4]{BossBook} or \cite{KampfWachterBoss2010}):

\begin{equation}\label{AW1}
\rho_{i,\beta}=\ln \frac{P(y_i|c_i=\beta)}{\displaystyle\sum_{l \ne j} P(y_i|c_i=l)}.
\end{equation}

Let $z_\beta=(z_{\beta,0},z_{\beta,k},\dots,z_{\beta,m-1}) \in \{-1, +1\}^{m}$ be the direct BPSK modulation of $\beta \in \Ftwom$ under the chosen basis over $\Ftwo$.
Since our channel is AWGN and memory-less, the above becomes:
\begin{equation}\label{AW3}
\rho_{i,\beta}=\ln \frac{\exp \left\lbrace \displaystyle\sum_{k=0}^{m-1} -\frac{1}{2\sigma^2}(y_{i,k}-z_{\beta,k})^2 \right\rbrace}{\displaystyle\sum_{l \ne \beta} \exp \left\lbrace \displaystyle\sum_{k=0}^{m-1} -\frac{1}{2\sigma^2}(y_{i,k}-z_{l,k})^2 \right\rbrace}.
\end{equation}

From $\vec \rho$, the receiver proceeds to extract both $\vec r$ and $\vec \eta$.
Firstly, $\vec r$ is chosen as $r_i = \argmax_\beta \{ \rho_{i, \beta} \}$, i.e.~the most likely symbol for each position, ties broken arbitrarily.

For $\vec{\eta}$, we use the principle from \cite{KanNishInazHira1994}:
\def\first{^{(1{\rm st})}}
\def\second{^{(2{\rm nd})}}
\[
 \eta_i = \rho_i\first - \rho_i\second,
\] 
where $\rho_i\first = \max_i\{ \rho_{i, \beta} \}$ and $\rho_i\second = \max_i \{ \rho_{i, \beta} \mid \rho_{i, \beta} \neq \rho_i\first \}$.

\section{Reduced List-Decoding} \label{S:Red}

What we want to do is very close to the algorithm of Section \ref{S:Wu}: we initially use the hard-decision guess $\vec r$ to compute a syndrome and run the EEA.
This succeeds in finding a codeword if $\vec r$ is less than $d/2$ errors away from one.
If this fails, we want to do rational interpolation; however, instead of using all $n$ points we restrict ourselves to only the least reliable positions, guided by $\vec \eta$.
How many positions are considered versus how many errors are ``caught'' inside these positions determine whether the rational interpolation succeeds.

More formally, introduce again $E, \errs$ and $\Lambda(x)$ based on $\vec r$, and assume that $\errs > d/2$.
According to Proposition \ref{prop:Lambda_partial}, by running the EEA on $S(x)$ and $x^{d-1}$ we can obtain polynomials $H_1(x), H_2(x)$ such that there exist some polynomials $A(x), B(x)$ with
\begin{equation}
  \label{eqn:LambdaFromHsRed}
  \Lambda(x) = A(x)H_1(x) + B(x)H_2(x)
\end{equation}
and $\deg A(x) = \errs - \deg H_1(x)$ and $\deg B(x) = \errs - \deg H_2(x)$.

Let $L$ be the number of positions to use for rational interpolation; it is not straightforward how to best choose this, but we get back to it later.
Without loss of generality, assume that reliabilities obtained from the previous section are sorted in an ascending order such that $\eta_0\le\eta_1\le\dots\le\eta_{n-1}$, such that the chosen positions are the $L$ first.

There are now two counters to keep in mind: $\errs$, the total number of errors, and $\errsL$, the number of errors present in the $L$ chosen positions.
The success of the rational interpolation depends on both of these. Likewise, choose two ``goal'' values: $\tau$ and $\tau_L \leq \tau$. They are \emph{loosely} the targeted value of $\errs$, respectively the number of errors there must be in the $L$ chosen positions for us to succeed. A visualisation of the parameters is shown in Figure \ref{fig:visual}.

\begin{figure}[t] 
\centering 
\includegraphics[scale=1.1]{./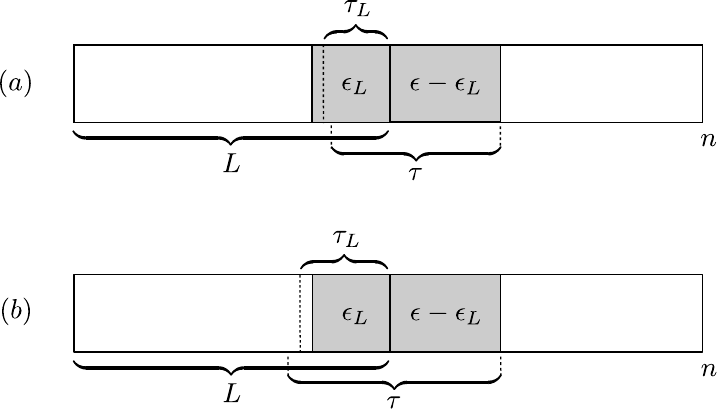}
\caption{Visualisation of $L$, $\epsilon$, $\epsilon_L$, $\tau$ and $\tau_L$. Error Positions have been ordered for overview. In $(a)$ $\epsilon>\tau$ and in $(b)$ $\epsilon<\tau$, but in either case decoding succeeds since $\epsilon_L$ is sufficiently large, as according to Lemma \ref{lem:errsL}.}\label{fig:visual}
\end{figure}

By Theorem \ref{RatIntTheorem}, and following arguments similar to those in Section \ref{S:Wu}, we then set up a rational interpolation problem for finding $A(x)$ and $B(x)$ using $H_1(x)$ and $H_2(x)$:
\begin{proposition}
  \label{prop:ReducedWu}
  Consider Theorem \ref{RatIntTheorem} and let $N = L$ and $(x_i, y_i \pr z_i) = \big(\alpha_i, H_1(\alpha_i) \pr H_2(\alpha_i)\big)$ for $i=0,\ldots, L-1$, as well as $T = \tau_L$, $w_1 = \tau - \deg H_1(x)$ and $w_2 = \tau - d + \deg H_2(x)$.
  If
  \begin{equation}
    \label{eqn:tauL}
    \tau_L^2 > L(2\tau-d) ,
  \end{equation}
  there exist valid choices of $s$ and $\ell$ such that a $Q(y,z)$ satisfying the requirements of Theorem \ref{RatIntTheorem} exists.
  Furthermore, if $\errs = \tau$ and $\errsL = \tau_L$, then $Q(A(x), B(x)) = 0$.
\end{proposition}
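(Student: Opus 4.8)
The plan is to obtain both parts of the proposition directly from the two ingredients recalled above: the feasibility estimate (\ref{Bound}) for the rational interpolation problem, sharpened by the remark that the multiplicity $s$ and list size $\ell$ may in addition be chosen with $\ell/s \ge T/(w_1+w_2)$ \cite[Proposition 5.7]{NielsenPhD}, and Theorem \ref{RatIntTheorem} itself.

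For the existence of a suitable $Q$, I would feed the stated data $N=L$, $T=\tau_L$, and the weighting $(1,w_1,w_2)$ into the feasibility analysis. The only point requiring care is the bound on $w_1+w_2$:
\[
  w_1+w_2 \;=\; \bigl(\tau-\deg H_1(x)\bigr) + \bigl(\tau-d+\deg H_2(x)\bigr) \;=\; 2\tau - d + \bigl(\deg H_2(x)-\deg H_1(x)\bigr).
\]
Since $H_1(x)=U_i(x)$ and $H_2(x)=U_{i-1}(x)$ are consecutive auxiliary polynomials of the EEA run on $x^{d-1}$ and $S(x)$, and the degrees of the $U_j(x)$ strictly increase along the iterations, we have $\deg H_2(x)<\deg H_1(x)$ and hence $w_1+w_2 \le 2\tau-d$. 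Thus hypothesis (\ref{eqn:tauL}) gives $T^2=\tau_L^2 > L(2\tau-d) \ge L(w_1+w_2) = N(w_1+w_2)$, so by (\ref{Bound}) a non-zero $Q\in\Fq[x][y,z]$ exists that is homogeneous of degree $\ell$ in $(y,z)$, vanishes with multiplicity $s$ at each of the $L$ points $\bigl(\alpha_i,H_1(\alpha_i)\pr H_2(\alpha_i)\bigr)$, and satisfies $\deg_{(1,w_1,w_2)}Q < s\tau_L$; the quoted refinement additionally lets us take $\ell/s \ge \tau_L/(w_1+w_2)$. These are exactly the conditions Theorem \ref{RatIntTheorem} imposes on $Q$.

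For the second part I would instantiate Theorem \ref{RatIntTheorem} with $\bigl(f_1(x),f_2(x)\bigr)=\bigl(A(x),B(x)\bigr)$, checking its three hypotheses under $\errs=\tau$ and $\errsL=\tau_L$. The degree constraints $\deg f_1(x)\le w_1$ and $\deg f_2(x)\le w_2$ follow from the degree relations attached to (\ref{eqn:LambdaFromHsRed}) once $\errs=\tau$ (using the same EEA degree identity as above). The interpolation points are legitimate elements of $\Fq\times\mathbb P^1_{\Fq}$ because $H_1(x),H_2(x)$ are coprime and hence share no root, ruling out $(0\pr 0)$. For the root condition, evaluating (\ref{eqn:LambdaFromHsRed}) at an error position $\alpha_i$ gives
\[
  A(\alpha_i)H_1(\alpha_i) + B(\alpha_i)H_2(\alpha_i) \;=\; \Lambda(\alpha_i) \;=\; 0,
\]
which is precisely the root requirement of Theorem \ref{RatIntTheorem} at $\bigl(\alpha_i,H_1(\alpha_i)\pr H_2(\alpha_i)\bigr)$; since $\errsL=\tau_L$ of the error positions lie among the $L$ chosen ones, this holds for at least $T=\tau_L$ indices (and $L\ge\errsL=\tau_L=T$, as the theorem requires). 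Granting that $A(x)$ and $B(x)$ are coprime, Theorem \ref{RatIntTheorem} then yields $Q\bigl(A(x),B(x)\bigr)=0$.

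I expect the main obstacle to be the degree bookkeeping together with the coprimality of $A(x)$ and $B(x)$. Concretely, one must extract from the particular EEA iteration fixed in Proposition \ref{prop:Lambda_partial} the exact relations among $\deg H_1(x)$, $\deg H_2(x)$, $d$, and $\errs$, and verify that they make $\deg A(x)\le w_1$, $\deg B(x)\le w_2$, and $w_1+w_2\le 2\tau-d$ hold simultaneously; and one must rule out a nontrivial $\gcd(A(x),B(x))$ — any common factor divides $\Lambda(x)=A(x)H_1(x)+B(x)H_2(x)$, and one argues, as for the hard-decision Wu decoder in \cite{WuNiel2013}, that for the chosen iteration it is forced to be a unit (otherwise it would dilute the effective number of caught errors below $\tau_L$). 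Once these structural facts are pinned down, both assertions drop out of (\ref{Bound}) and Theorem \ref{RatIntTheorem}.
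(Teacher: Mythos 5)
Your proposal is correct and follows essentially the same route as the paper's (very terse) proof: existence of $Q$ from the bound \eqref{Bound} with $T=\tau_L$, $N=L$, and the root property by applying Theorem \ref{RatIntTheorem} to $(f_1,f_2)=(A(x),B(x))$ via $\Lambda(\alpha_i)=0$ at the $\errsL=\tau_L$ caught error positions. The two points you explicitly leave open are exactly what the paper imports from Proposition \ref{prop:Lambda_partial} and \cite{WuNiel2013}: the degree relation $\deg H_1(x)+\deg H_2(x)=d$ for the chosen EEA iteration (which simultaneously gives $\deg B(x)=\errs-\deg H_2(x)\le w_2$ and $w_1+w_2=2\tau-d$, rather than your strict-increase argument), and the coprimality of $A(x),B(x)$, which follows because any common factor would divide both $\Lambda(x)$ and $\Omega(x)$, which are coprime.
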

\begin{proof}
  The existence of $Q$ follows directly from Theorem \ref{RatIntTheorem} and (\ref{Bound}).
  The property $Q(x, A(x), B(x)) = 0$ follows from the arguments of Section \ref{S:Wu} since the equation $A(\alpha_i)H_1(\alpha_i) + B(\alpha_i)H_2(\alpha_i) = 0$ holds for $T = \tau_L$ out of $N=L$ values of $i$, and since $\deg A(x) + \deg B(x) \leq 2\errs - (\deg H_1(x) + \deg H_2(x)) = 2\tau-d = w_1+w_2$.
\end{proof}

\begin{algorithm}
  \caption{Reduced list-decoding with reliability}
  \label{alg:RedWu}
  \begin{algorithmic}[1]
    \Require A GRS code $\mathcal{C}$ over $\Fq$ with parameters $n,k,$ $d=n-k+1$ and evaluation points $ \alpha_0,\ldots,\alpha_{n-1}$.
    \hspace{5cm}
    A hard-decision guess $\vec r \in \Fq^n$ and reliability vector $\boldsymbol{\eta}=(\eta_0,\eta_1,\dots,\eta_{n-1})$.
    A decoding radius $\tau$ and the desired number of points to interpolate $L$.
    \Ensure A list of codewords in $\mathcal{C}$ or \Fail.
    \Statex % Blank line
    \State Calculate the syndrome $S(x)$ from $\vec r$ as in \eqref{eqn:rsSyn}.
    \State Run the EEA on $x^{d-1}, S(x)$ and calculate $H_1(x), H_2(x)$.
    \State If $H_1(x)$ is a valid error-locator of degree less than $d/2$, use it to correct $\vec r$, and if this yields a word in $\mathcal{C}$, return this one word.
    \State Otherwise, we seek $A(x), B(x)$ as in \eqref{MainEqn}.
    Set $\tau_L=\lfloor \sqrt{L(2\tau - d)} + 1\rfloor$ and set $w_1, w_2$ as in Proposition \ref{prop:ReducedWu}.
    Construct the $Q(y,z)$ described in that proposition, using satisfactory values of $s$ and $\ell$.
    \State Find all pairs of polynomials $A\T(x)$ and $B\T(x)$ such that $Q(A\T(x),B\T(x)) = 0$.
    Return \Fail if no such pairs exist.
    \State For each such pair, construct $\Lambda\T(x) = A\T(x) H_1(x) + B\T(x) H_2(x)$.
    If it is a valid error-locator, use it for correcting $\vec r$.
    Return \Fail if none of the factors yield error-locators.
    \State Return those of the corrected words that are in $\mathcal{C}$.
    Return \Fail if there are no such words.
  \end{algorithmic}
\end{algorithm}

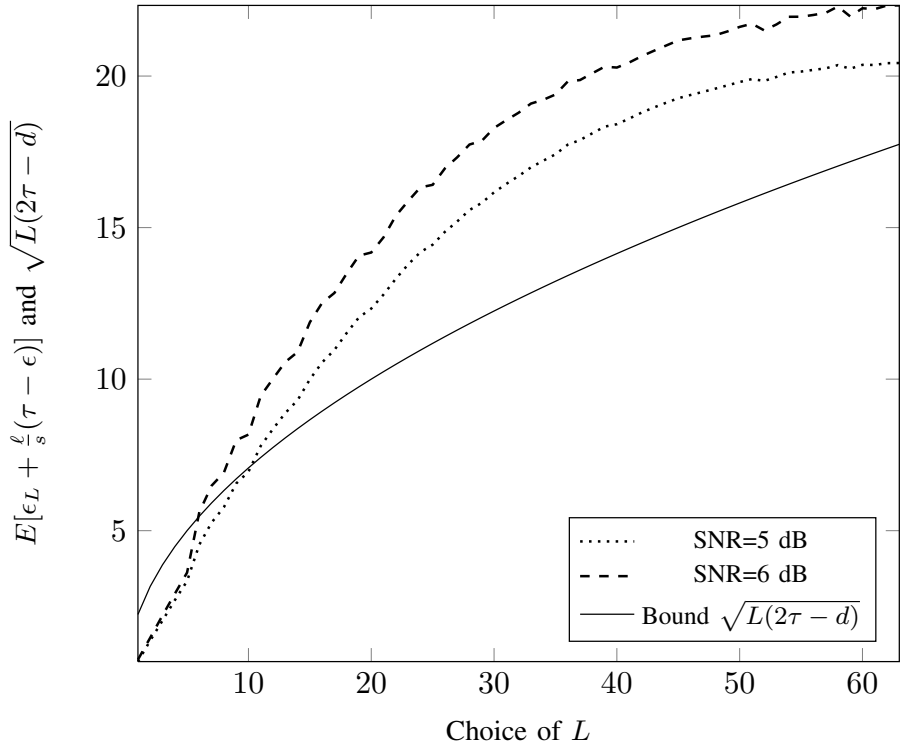
\begin{figure*}[t]
\begin{center}
\begin{tikzpicture}[scale=1.2]
\begin{axis}[enlargelimits=false,scale only axis,
xlabel=Choice of $L$,
ylabel=$E \mathbb{[} \epsilon_L+\tfrac \ell s(\tau - \errs) \mathbb{]} $ and $\sqrt{L(2\tau-d)}$,
legend pos= south east]
% cycle list name=black white]

\pgfplotsset{
label style={font=\small},
legend style={font=\footnotesize}
}

\addplot[color=black, dotted, thick]
 table[x=i,y=SNR1] {./MTNS_2nd.data};
\addlegendentry{SNR=5 dB}

%\addplot[color=black,densely dashed]
% table[x=i,y=SNR2] {./MTNS_2nd.data};
%\addlegendentry{SNR=5.2 dB}

%\addplot[color=black,densely dashed]
% table[x=i,y=SNR2] {./MTNS_2nd.data};
%\addlegendentry{SNR=5.4 dB}
%
%
%\addplot[color=black,densely dotted, thick]
% table[x=i,y=SNR3] {./MTNS_2nd.data};
%\addlegendentry{SNR=5.8 dB}

\addplot[color=black,dashed,thick]
 table[x=i,y=SNR5] {./MTNS_2nd.data};
\addlegendentry{SNR=6 dB}

\addplot[color=black]
 table[x=i,y=Bound] {./MTNS_2nd.data};
\addlegendentry{Bound $\sqrt{L(2\tau-d)}$}

\end{axis}
\end{tikzpicture}
\end{center}
\caption{$E[\errsL+\tfrac \ell s(\tau - \errs)]$ for $L$ least reliable positions vs $\sqrt{L(2\tau -d)}$.} \label{AVGtauL}
\end{figure*}
As before, we also have to consider if we succeed whenever $\errs \neq \tau$.
Here the situation is surprisingly different than in Section \ref{S:Wu}:
\begin{lemma}\label{lem:errsL}
  Considering Proposition \ref{prop:ReducedWu} when $\errs \neq \tau$, then\\ $Q(A(x), B(x)) = 0$ if:
  \begin{IEEEeqnarray*}{rCl+l:rCl}
    \frac{\ell}{s} & \geq & \frac{\tau_L-\errsL}{\tau-\errs}, & \textrm{whenever } & \tau & > & \errs,\\ 
    \frac{\ell}{s} & \leq & \frac{\errsL-\tau_L}{\errs-\tau}, & \textrm{whenever } & \tau & < & \errs.
  \end{IEEEeqnarray*}
\end{lemma}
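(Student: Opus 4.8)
The plan is to reduce the claim to the subtle existence fact quoted at the end of Section~\ref{S:Defs}: that for the rational interpolation problem one can choose $s$ and $\ell$ so that a satisfactory $Q$ exists \emph{and} additionally $\ell/s \ge T/w$, where here $T = \tau_L$ and $w = w_1 + w_2 = 2\tau - d$. So the $Q$ constructed in Proposition~\ref{prop:ReducedWu} may be taken to satisfy $\ell/s \ge \tau_L/(2\tau-d)$; I would state this at the outset. The point of the lemma is that this same $Q$ remains a valid interpolation polynomial for the genuine problem, where the true parameters are $\errsL$ interpolation constraints out of $L$ points, and the true degree budget for $A(x)$ and $B(x)$ is $\errs - \deg H_1(x)$ and $\errs - d + \deg H_2(x)$, summing to $2\errs - d$ rather than $2\tau - d$.

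First I would recall the mechanism by which a multiplicity-$s$, list-size-$\ell$ polynomial $Q(x,y,z)$ vanishing on a point $(\alpha_i, H_1(\alpha_i) \pr H_2(\alpha_i))$ forces $Q(x,A(x),B(x)) = 0$: each of the $T$ satisfied interpolation points contributes $s$ to the multiplicity of the univariate polynomial $g(x) := Q(x,A(x),B(x))$ at $x = \alpha_i$, while the weighted-degree bound $\deg_{(1,w_1,w_2)} Q < sT$ combined with $\deg A \le w_1$, $\deg B \le w_2$ controls $\deg g$. In the genuine problem we have $s\,\errsL$ forced zeros (from the $\errsL$ actual error positions among the $L$ interpolation points, at each of which $A(\alpha_i)H_1(\alpha_i) + B(\alpha_i)H_2(\alpha_i) = \Lambda(\alpha_i) = 0$), and
\[
  \deg g \;\le\; \deg_{(1,w_1',w_2')} Q \quad\text{where } w_1' = \errs - \deg H_1(x),\ w_2' = \errs - d + \deg H_2(x).
\]
Since $w_1' = w_1 - (\tau - \errs)$ and $w_2' = w_2 - (\tau - \errs)$, the bound on $Q$ with the true weights satisfies
\[
  \deg_{(1,w_1',w_2')} Q \;\le\; \deg_{(1,w_1,w_2)} Q \;-\; \ell(\tau - \errs) \;<\; s\tau_L - \ell(\tau - \errs),
\]
using that $y$ and $z$ together appear with total degree exactly $\ell$ in every monomial of the homogeneous $Q$. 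Hence $Q(x,A(x),B(x)) = 0$ is forced as soon as the number of guaranteed zeros exceeds the degree bound, i.e.
\[
  s\,\errsL \;\ge\; s\tau_L - \ell(\tau - \errs),
\]
which, in the case $\tau > \errs$, rearranges precisely to $\ell/s \ge (\tau_L - \errsL)/(\tau - \errs)$; for $\tau < \errs$ the factor $(\tau - \errs)$ is negative, so dividing flips the inequality and yields $\ell/s \le (\errsL - \tau_L)/(\errs - \tau)$. (When $\tau = \errs$ one is back in the situation of Proposition~\ref{prop:ReducedWu}, provided $\errsL \ge \tau_L$.)

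The main obstacle, and the step needing the most care, is the bookkeeping that turns $\deg_{(1,w_1,w_2)} Q < s\tau_L$ into the corresponding bound for the \emph{shifted} weights $(1, w_1', w_2')$. One has to argue that decreasing both $w_y$ and $w_z$ by the same amount $\delta := \tau - \errs$ decreases the weighted degree of every monomial $x^a y^b z^c$ of $Q$ by exactly $\delta(b+c) = \delta\ell$ — this uses crucially that $Q$ is homogeneous of degree $\ell$ in $(y,z)$, so $b + c = \ell$ throughout — and that when $\delta < 0$ the inequality on the weighted degree is still an upper bound (it is, since we are adding a positive quantity $|\delta|\ell$ to something bounded above). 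A secondary point to handle cleanly is that $Q(x,A(x),B(x))$ is the zero polynomial rather than merely low-degree: this follows because a nonzero polynomial of degree strictly less than its number of roots-with-multiplicity must vanish identically, exactly as in Theorem~\ref{RatIntTheorem}. I would present the $\tau > \errs$ case in full and remark that the $\tau < \errs$ case is identical up to the sign of $\delta$, noting that the existence of $s,\ell$ with $\ell/s$ in the required range is what was promised by the refined version of \eqref{Bound}.
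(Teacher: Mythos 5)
Your proof is correct and follows essentially the same route as the paper: reinterpret the very same $Q$ as a solution of the interpolation problem with the true parameters $\hat T = \errsL$, $\hat w_1 = \errs - \deg H_1(x)$, $\hat w_2 = \errs - d + \deg H_2(x)$, use homogeneity of $Q$ in $(y,z)$ to show the weighted degree drops by exactly $\ell(\tau-\errs)$, and rearrange $s\,\errsL \ge s\tau_L - \ell(\tau-\errs)$ into the two stated inequalities according to the sign of $\tau-\errs$. The only differences are cosmetic: you inline the root-counting mechanism of Theorem~\ref{RatIntTheorem} where the paper invokes it as a black box, and your $\hat w_2$ carries the sign the paper evidently intends (its displayed $\hat w_2 = \errs + d - \deg H_2(x)$ is a typo, since only $\errs - d + \deg H_2(x)$ makes the weight differences equal $\tau-\errs$ and the $\min$-computation come out to $\ell(\tau-\errs)$).
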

\begin{proof}
  We prove the assertion by showing that $Q$ is a valid interpolation polynomial satisfying the requirements of Theorem \ref{RatIntTheorem} for almost the same rational interpolation problem but with $\hat T=\errsL$ and $\hat w_1 = \errs - \deg H_1(x)$ and $\hat w_2 = \errs + d - \deg H_2(x)$; the ``hats'' are added to distinguish these new parameters from those of Proposition \ref{prop:ReducedWu}. 
  For in that case $Q(x, A(x), B(x)) = 0$ follows from the theorem due to Proposition \ref{prop:Lambda_partial}.
 
  Since only $T, w_1$ and $w_2$ are changed in the newly considered rational interpolation, we only need to show that the new weighted-degree constraints on $Q$ are satisfied, i.e.~that:
  \begin{IEEEeqnarray*}{rCl+c}      
      \deg_{(1,\hat w_1,\hat w_2)}Q & < & s\errsL.
  \end{IEEEeqnarray*}
  Since $Q$ satisfied the original interpolation problem, then $\deg_{(1,w_1,w_2)} Q < s\tau_L$.
  We then compute
  \begin{IEEEeqnarray*}{rCl+c}
    \deg_{(1,\hat w_1,\hat w_2)}Q & \leq & \deg_{(1,w_1,w_2)}Q                                                                                       \\
                                  &      & \hspace{2em} -\min_{i=0,\ldots,\ell} \{\mbox{\scriptsize $i(w_1- \hat w_1) + (\ell - i)(w_2-\hat w_2)$}\} \\
                                  & = & \deg_{(1,w_1,w_2)}Q - \ell(\tau - \errs).
  \end{IEEEeqnarray*}
  Thus $\deg_{(1,\hat w_1,\hat w_2)}Q$ is satisfactory low whenever
  \[
    s\errsL \geq s\tau_L - \ell (\tau- \errs),
  \]
  which is equivalent to the conditions of the lemma.
\end{proof}

The above lemma thus reveals that when $\errs < \tau$, we can succeed, and we can even do so when fewer errors than $\tau_L$ are ``caught'' in the $L$ chosen positions.
But, perhaps surprisingly, it also reveals that when $\errs > \tau$, we can \emph{still} succeed as long as we also catch more errors in the chosen positions.

Algorithm \ref{alg:RedWu} is the complete proposed decoding algorithm.
The following theorem precisely characterises which codewords are returned:
\begin{theorem}
Let $\vec r$,$\vec \eta$ be the received word and its reliability vector respectively. Let $L,\ell,s,\tau,\tau_L$ be parameters as defined in Algorithm \ref{alg:RedWu}. If $\exists$ $\vec c \in \mathcal{C}$ s.t. $wt(\vec c-\vec r) < d/2$, then $\vec c$ is returned. 
Otherwise, the set T $\subset \mathcal{C}$ is returned s.t $\vec c \in T$ iff:

\[ wt(\vec c-\vec r)_L \ge \tau_L- \tfrac{\ell}{s} (\tau - wt(\vec c-\vec r) ), \]

where $wt(\vec x)_L$ is the number of non-zero entries of $\vec x$ within the least reliable $L$ positions according to $\vec \eta$.
\end{theorem}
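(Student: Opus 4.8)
The plan is to follow a fixed codeword $\vec c\in\mathcal{C}$ through Algorithm~\ref{alg:RedWu}, splitting on whether $wt(\vec c-\vec r)<d/2$. Suppose first that some codeword lies within distance $d/2$ of $\vec r$; it is then the unique such codeword, since two of them would be within distance $d$ of each other. By the classical key-equation theory recalled after Proposition~\ref{prop:Lambda_partial}, running the EEA on $x^{d-1}$ and the syndrome $S(x)$ of $\vec r$ returns, up to a nonzero scalar, $H_1(x)=\prod_{i\in E}(x-\alpha_i)$ with $E=\{i:c_i\neq r_i\}$ and $\deg H_1=|E|<d/2$; Step~3 recognises it as an error locator, corrects $\vec r$ to $\vec c$, and outputs it. If instead no codeword is within $d/2$ of $\vec r$, Step~3 outputs nothing (a valid error locator of degree $<d/2$ would correct $\vec r$ to a word within $d/2$, which by assumption is not in $\mathcal{C}$), and control passes to Steps~4--7; this is the regime of the second claim.

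Assume from now on that no codeword is within $d/2$ of $\vec r$, fix $\vec c\in\mathcal{C}$, and write $\errs=wt(\vec c-\vec r)$, $\errsL=wt(\vec c-\vec r)_L$, $E=\{i:c_i\neq r_i\}$ and $\Lambda(x)=\prod_{i\in E}(x-\alpha_i)$, so $\errs=\deg\Lambda\geq d/2$. With $H_1,H_2$ as computed in Step~2 (these depend only on $\vec r$), Proposition~\ref{prop:Lambda_partial} supplies coprime $A(x),B(x)$ with $\Lambda=AH_1+BH_2$, $\deg A=\errs-\deg H_1$ and, using $\deg H_1+\deg H_2=d$, $\deg(BH_2)\leq\errs$. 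Since $\Lambda$ vanishes precisely on $\{\alpha_i:i\in E\}$, the pair $(A,B)$ satisfies the rational-interpolation equality $H_2(\alpha_i)A(\alpha_i)+H_1(\alpha_i)B(\alpha_i)=0$ at each of the $\errsL$ indices of $E$ lying among the $L$ least reliable positions. This is exactly the ``hatted'' interpolation instance of the proof of Lemma~\ref{lem:errsL}: $N=L$, target $\errsL$, and weights $\hat w_1,\hat w_2$ exceeding the weights $w_1,w_2$ of Proposition~\ref{prop:ReducedWu} by $\errs-\tau$.

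For the ``if'' direction, assume $\errsL\geq\tau_L-\tfrac{\ell}{s}(\tau-\errs)$, i.e.\ $s\errsL\geq s\tau_L-\ell(\tau-\errs)$. The polynomial $Q$ built in Step~4 has $s$-fold zeros at all $L$ points and $\deg_{(1,w_1,w_2)}Q<s\tau_L$, so the weight-shift computation of the proof of Lemma~\ref{lem:errsL} (trivial when $\errs=\tau$, where it recovers Proposition~\ref{prop:ReducedWu}) gives
\[
  \deg_{(1,\hat w_1,\hat w_2)}Q \leq \deg_{(1,w_1,w_2)}Q-\ell(\tau-\errs) < s\tau_L-\ell(\tau-\errs) \leq s\errsL .
\]
Hence $Q$ meets the hypotheses of Theorem~\ref{RatIntTheorem} for the hatted instance, so $Q(x,A(x),B(x))=0$. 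Therefore $(A,B)$ is among the pairs returned by Step~5; Step~6 forms $AH_1+BH_2=\Lambda$, a genuine error locator with root set $\{\alpha_i:i\in E\}$, and correcting $\vec r$ by it yields $\vec c$, which Step~7 keeps because $\vec c\in\mathcal{C}$. Thus $\vec c\in T$.

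The ``only if'' direction is where the real work lies, and is the step I expect to be the main obstacle. If $\vec c$ is returned, Step~6 used a coprime pair $(A\T,B\T)$ with $Q(x,A\T,B\T)=0$ and $A\T H_1+B\T H_2$ proportional to $\Lambda$; every coprime representation of $\Lambda$ of this form differs from the canonical $(A,B)$ above by a shift $(A+uH_2,B-uH_1)$, and since $\tau<d$ a nonzero $u$ strictly raises the degree of the $A$-part, so the canonical pair is the degree-minimal one. Using this, together with the degree bounds governing the root search of Step~5, one reduces to the case $Q(x,A(x),B(x))=0$, and the task becomes to deduce the displayed inequality from this single vanishing. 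The contrapositive asks: if $\errsL<\tau_L-\tfrac{\ell}{s}(\tau-\errs)$, why must $Q$ fail to vanish on $(A,B)$? The weight-shift bound only gives $\deg_{(1,\hat w_1,\hat w_2)}Q<s\tau_L-\ell(\tau-\errs)$, while the $s$-fold zeros of $Q$ at the $\errsL$ common points force $\deg Q(x,A,B)\geq s\errsL$ whenever $Q(x,A,B)\neq0$; once the inequality fails these two bounds are compatible, so non-vanishing does not follow from counting alone. Closing this gap requires exploiting that the $Q$ of Step~4 is chosen \emph{minimally} --- of least $(1,w_1,w_2)$-weighted degree among the solutions of the interpolation linear system --- so that any vanishing on $(A,B)$ not forced by Theorem~\ref{RatIntTheorem} contradicts minimality. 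Making this rigorous, and thereby confirming that the algorithm returns no codeword outside the set described by the inequality, is the delicate point of the proof.
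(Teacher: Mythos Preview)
Your treatment of the first case and of the ``if'' direction is correct and is exactly what the paper does, only spelled out in more detail: the paper's entire proof is the single line ``Follows from Proposition~\ref{prop:ReducedWu} and Lemma~\ref{lem:errsL}.'' Those two results are precisely the weight-shift argument you reproduce, and they indeed yield $Q(x,A(x),B(x))=0$ whenever $\errsL\geq\tau_L-\tfrac{\ell}{s}(\tau-\errs)$, hence $\vec c\in T$.

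Where you diverge from the paper is in taking the ``only if'' direction seriously. You are right that Proposition~\ref{prop:ReducedWu} and Lemma~\ref{lem:errsL} give only sufficient conditions for $Q(x,A,B)=0$; neither says anything about when this vanishing \emph{fails}. The paper's one-line proof therefore does not establish the converse either, and your counting analysis shows clearly why: once the inequality is violated, the lower bound $s\errsL$ on $\deg Q(x,A,B)$ (from the $s$-fold zeros) and the upper bound $s\tau_L-\ell(\tau-\errs)$ on $\deg_{(1,\hat w_1,\hat w_2)}Q$ become compatible, so nothing forces $Q(x,A,B)\neq 0$. Your proposed rescue via minimality of $Q$ is not how the paper argues, and in fact a minimal-degree $Q$ can still have ``accidental'' $(y,z)$-roots not predicted by Theorem~\ref{RatIntTheorem}; this is the same phenomenon as spurious roots in Guruswami--Sudan interpolation, which are simply filtered out in the final codeword check. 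So the gap you identify is genuine: the theorem as stated asserts an ``iff'', but only the ``if'' half is actually supported by the cited results, and the algorithm may in principle return additional codewords beyond those described by the inequality. Your instinct that this is the delicate point is correct; the paper simply does not address it.
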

\begin{proof}
  Follows from Proposition \ref{prop:ReducedWu} and Lemma \ref{lem:errsL}.
\end{proof}

Let us estimate the complexity:
up until Step 3, we have classical minimum distance decoding, which can be performed in $\mathcal O(n \log^{2+o(1)} n)$.
The remaining steps are as in regular Wu list-decoding, but where we only use $L$ of the total $n$ points.
The complexity of this part is therefore $\mathcal{O}(\ell^M s L \log^{\mathcal{O}(1)}(\ell L))$, with the same hidden constant as in Wu list-decoding.
As we see in the following section, $L$ can be chosen as only a fraction of $n$ while still ensuring good decoding performance.

\begin{figure*}[!t]
\begin{center}
\begin{tikzpicture}[scale=1.21]
\begin{semilogyaxis}[enlargelimits=false,scale only axis,
xlabel=SNR (dB),
ylabel=Block Failure Probability,
legend pos= south west]
% cycle list name=black white]

\pgfplotsset{
label style={font=\small},
legend style={font=\footnotesize}
}

\addplot[color=black,thick]
 table[x=SNR,y=Classical] {./MTNS_1st.data};
\addlegendentry{Classical $\lfloor\nicefrac{d-1}{2}\rfloor=16$}

\addplot[color=black,loosely dotted,thick]
 table[x=SNR,y=Wu] {./MTNS_1st.data};
\addlegendentry{Wu, $\tau=19$}

\addplot[color=black,dotted, thick]
 table[x=SNR,y=KV] {./MTNS_1st.data};
\addlegendentry{K\"otter--Vardy}

\addplot[color=black,densely dashed, thin]
 table[x=SNR,y=Red_Wu_15] {./MTNS_1st.data};
\addlegendentry{Reduced Wu, $L=15$}

\addplot[color=black,thin]
 table[x=SNR,y=Red_Wu_25] {./MTNS_1st.data};
\addlegendentry{Reduced Wu, $L=25$}

\addplot[color=black,dashed,thick]
 table[x=SNR,y=Red_Wu_45] {./MTNS_1st.data};
\addlegendentry{Reduced Wu, $L=45$}

%\addplot[color=black,thin]
% table[x=SNR,y=Genie] {./Needed/MTNS_1st.data};
%\addlegendentry{Genie decoder with $e=22$}

\end{semilogyaxis}
\end{tikzpicture}
\end{center}
%\vspace{-15pt}
\caption{Probability of failure vs. SNR.} \label{Pf}
%\vspace{-15pt}
\end{figure*}
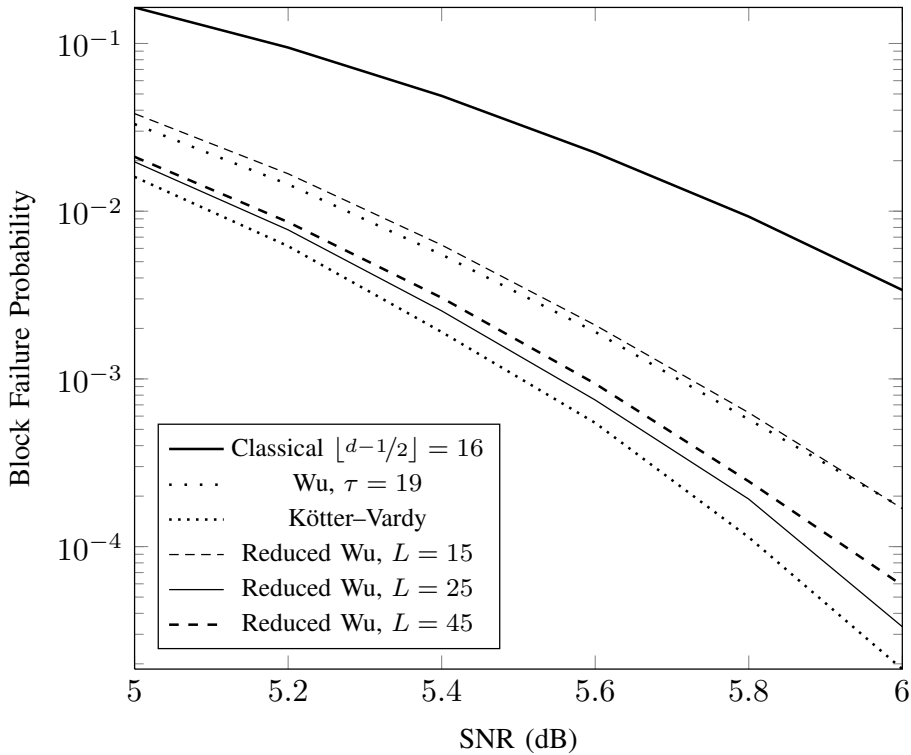

\subsection*{On the Choice of Parameters}

For the algorithm one needs to choose $L, \tau, \tau_L, \ell$ and $s$. We have not yet determined which choices lead to optimal performance.
However, our simulations indicate that the somewhat ad hoc choices we have made so far still gives good performance, and simultaneously show that a gain could definitely be made.
We here briefly go through some of our current observations.
Further analysis on the best possible choices and its impact on performance would be interesting.

Principally, in our simulations we have fixed $\tau$ and $L$ for every received word, but we expect that one could benefit from varying one or both parameters depending on the distribution of the reliabilities in $\vec\eta$.

The best choice of $\tau_L$ is easy, however.
We would like to succeed while having to catch as few errors as possible, and when $\errs = \tau$ this is clearly obtained by choosing $\tau_L$ minimally while satisfying \eqref{eqn:tauL}.
When $\errs \neq \tau$, we see from Lemma \ref{lem:errsL} that in all cases it is still beneficial to choose $\tau_L$ low.
Therefore, we can safely always choose $\tau_L$ minimally, i.e.~$\tau_L = \lfloor 
\sqrt{L(2\tau-d)} + 1 \rfloor$.

Given the parameters $L$ and $\tau_L$, one has many choices for $\ell$ and $s$ in the rational interpolation. One can use the numerically smallest possible values, as by the analysis of e.g.~Trifonov~\cite{Trif2010}; this is what we have done in our simulations since this minimises the computational complexity.
Choosing differently with the aim of changing the ratio $\ell/s$ might improve decoding performance at the price of increasing the complexity: according to Lemma \ref{lem:errsL}, when $\tau > \errs$, we would like $\ell/s$ to be as large as possible to maximise probability of success, while when $\tau < \errs$, we would like it as small as possible. Whichever has the biggest impact on the decoding performance depends on the code, $\tau$ and the SNR.

\section{Simulation Results and Conclusion} \label{S:END}

\subsection{Simulation}

We have performed simulations with the proposed decoding method and compared it to two other decoding methods: the hard-decision Wu list-decoder and the KV algorithm.
We have used an $\mathcal{RS}(63,31,33)$ code over $\F_{2^6}$ and an AWGN channel with reliability information as described in Section \ref{S:Rel}. 
Approximately one million codewords were simulated for each SNR.  
We chose $\tau = 19$ since this is the maximum possible hard-decision list-decoding radius of the Wu decoder for this code
(this is 3 errors beyond half the minimum-distance). The choice of the parameters $L$ and $\tau_L$ in the reduced list-decoding algorithm determines the decoding performance as shown later. 

From Lemma \ref{lem:errsL} it follows that decoding succeeds whenever:
\begin{equation*}
 \errsL + \tfrac \ell s (\tau - \errs) \geq \tau_L
\end{equation*}
We have therefore examined by simulation what the expected value of the above left-hand-side is; this can be seen in Figure \ref{AVGtauL}:
the solid curve is $\sqrt{L(2\tau-d)} \approx \tau_L$ as a function of $L$, while the two dotted lines correspond to the average of $\errsL + \tfrac \ell s (\tau - \errs)$ at two different SNRs.
When varying $L$ the choices of $\ell$ and $s$ make the ratio $\ell/s$ jump up and down; this is due to integer rounding on the small, possible values of $\ell$ and $s$.
This is the reason the dotted curves are not smooth.
As explained in the previous section, this possibly has an effect on the decoding performance.

When the simulated curve is below the target $\sqrt{L(2\tau-d)}$, then poor decoding performance can be expected.
Therefore, for this code and SNRs, at least $L > 10$ has to be chosen.
Intuitively, we expect the best performance when the simulated curve is as far as possible above the target.

We chose to simulate with three fixed choices of $L$: 15, 25 and 45.
In Figure \ref{Pf} the probability of failure (both wrong decoding and decoding failure) are plotted for the chosen code when varying the SNR from 5 to 6 dB.
The curves are compared with the classical hard-decision minimum-distance decoding, the hard-decision Wu list-decoding, and the KV soft-decision algorithm \cite{KV2003}.

The KV algorithm is the best performing soft-decision decoder for RS codes currently known.
As reliability it uses the $\vec \rho$ matrix described in Section \ref{S:Rel}.
It has a parameter, the multiplicity-sum, for adjusting the performance at the price of complexity.
In our simulations we have used a very high multiplicity-sum, namely $2n$.

Figure \ref{Pf} demonstrates that choosing a small $L=15$ with $\tau_L=9$, the performance of the reduced list-decoder is close to that of the hard-decision Wu list-decoder.
When we increase $L$ to $25$ with $\tau_L=12$, the performance takes a huge jump, and it comes close to the performance of the KV algorithm.
Further increasing $L=45$ with $\tau_L=12$ again decreases the performance.

The values of $\ell/s$ used in the rational interpolation $\nicefrac{5}{3}$ for $L=15$, $2$ for $L=25$ and $3$ for $L=45$.
We have not yet further investigated the performance impact of this difference.

\subsection{Conclusion}

Initially, our goal was to achieve the decoding performance of a hard-decision decoder with a soft decoder which could exploit reliability information to achieve lower complexity.
Surprisingly, the resulting decoder seems to be able to exceed this performance -- for some parameters, by far -- and get close to the KV algorithm.
Our method seems to excel at medium-rate codes where the benefits of hard-decision list-decoding is usually modest.

This is still with a complexity which is a lot lower than both the KV algorithm and the Wu list-decoder.

An open question for future investigation is to determine the best reachable performance of this algorithm, and how the parameters should be chosen to achieve this; in particular, it seems promising to choose $L$ and $\tau$ differently for each received word, depending on $\vec \eta$.

It would also be interesting to examine the algorithm's performance in other channel models and base fields.
Yet another possibility is to carry the reliability information of $\vec \eta$ into the rational interpolation, to achieve a method which more dynamically favours certain positions over others.

\balance

%% Appendix:
%% If needed a single appendix is created by
%\appendix
%% If several appendices are needed, then the command
%\appendices
%% in combination with further \section-commands can be used.

%\addtolength{\textheight}{-10cm}
\bibliography{mhm}
\bibliographystyle{IEEEtran}

%% References:
%% We recommend the usage of BibTeX:
%%
%\bibliographystyle{IEEEtran}
%\bibliography{definitions,bibliofile}
%%
%% where we here have assume the existence of the files
%% definitions.bib and bibliofile.bib.
%% BibTeX documentation can be obtained at:
%% http://www.ctan.org/tex-archive/biblio/bibtex/contrib/doc/
%%
%%
%%
%% Or manual references (pay attention to consistency!):
%\begin{thebibliography}{1}
%\bibitem{shannon1948}
%  C.~E. Shannon, ``A mathematical theory of communication,''
%  \emph{Bell System Techn. J.}, vol.~27, pp. 379--423 and 623--656,
%  Jul. and Oct. 1948. 
%\end{thebibliography}

\end{document}